\newcommand{\rvec}{\bm{r}}
\newcommand{\yvec}{\bm{y}}
\newcommand{\uvec}{\bm{u}}
\newcommand{\evec}{\bm{e}}
\tikzstyle{startstop} = [rectangle, rounded corners, minimum width=1cm, minimum height=0.75cm,text centered, draw=black, fill=green!5]
\tikzstyle{io} = [trapezium, trapezium left angle=70, trapezium right angle=110, minimum width=3cm, minimum height=1cm, text centered, draw=black, fill=orange!15]
\tikzstyle{process} = [rectangle, minimum width=2cm, minimum height=0.75cm, text centered, text width=2.5cm, draw=black, fill=orange!15]
\tikzstyle{process2} = [rectangle, minimum width=1cm, minimum height=1cm, text centered, draw=black, fill=red!20]
\tikzstyle{processLong} = [rectangle, minimum height=0.75cm, text centered,  draw=black, fill=orange!15]
\tikzstyle{decision} = [diamond, minimum width=2cm, minimum height=0.75cm, text centered, draw=black, fill=blue!15]
\tikzstyle{arrow} = [thick,->,>=stealth]
\tikzstyle{sum} = [draw, circle, minimum size=.25cm]
\newcommand{\BEAS}{\begin{eqnarray*}}
	\newcommand{\EEAS}{\end{eqnarray*}}
\newcommand{\BEA}{\begin{eqnarray}}
	\newcommand{\EEA}{\end{eqnarray}}
\newcommand{\BEQ}{\begin{equation}}
	\newcommand{\EEQ}{\end{equation}}
\newcommand{\BIT}{\begin{itemize}}
	\newcommand{\EIT}{\end{itemize}}
\newcommand{\reals}{{\mbox{$\mathbb{R}$}}}
\newcommand{\integers}{{\mbox{$\mathbb{Z}$}}}
\newcommand{\argmin}{\mathop{\rm argmin}}
\newtheorem{theorem}{Theorem}[]
\newtheorem{proposition}[theorem]{Proposition}
\newcommand{\vvec}{\bm{v}}
\newcommand{\revn}[1] {\textcolor{.}{#1}}
\newcommand*\titleheader[1]{\gdef\@titleheader{#1}}
	\let\st@red@title\@title%
	\def\@title{%
		\bgroup\normalfont\large\centering\@titleheader\par\egroup
		\vskip1.5em\st@red@title}
\title{
	{Learning-Based Repetitive Precision Motion Control with\\ Mismatch Compensation}
}
\author{Efe C. Balta, Kira Barton, Dawn M. Tilbury, Alisa Rupenyan, John Lygeros 
\thanks{
This project has been funded in part by the Swiss National Science Foundation under NCCR Automation and in part by NSF Award \# 1544678.
	} 
	\thanks{
		K. Barton, and D. M. Tilbury are with the Department of Mechanical Engineering, University of Michigan, Ann Arbor, MI 48109, USA
		{\tt\small$\{$tilbury,bartonkl$\}$@umich.edu},}
	\thanks{
		E. C. Balta, A. Rupenyan and J. Lygeros are with the Automatic Control Laboratory, ETH Zurich, 8092 Zurich, Switzerland. A. Rupenyan is also with Inspire AG, 8092
		Zurich, Switzerland {\tt\small$\{$ebalta,ralisa,lygeros$\}$@control.ee.ethz.ch}}
}
\begin{document}
	
%	\maketitle
text asdfasdf
	{\let\newpage\relax\maketitle}
	\begin{abstract}
		Learning-based control methods utilize run-time data from the underlying process to improve the controller performance under model mismatch and unmodeled disturbances.
		This is beneficial for optimizing industrial processes, where the dynamics are difficult to model, and the repetitive nature of the process can be exploited.
		In this work, we develop an iterative approach for repetitive precision motion control problems where the objective is to follow a reference geometry with minimal tracking error.
		Our method utilizes a nominal model of the process and learns the mismatch using Gaussian Process Regression (GPR).
		The control input and the GPR data are updated after each iteration to improve the performance in a run-to-run fashion.
		We provide a preliminary convergence analysis, implementation details of the proposed controller for minimizing different error types, and a case study where we demonstrate improved tracking performance with simulation and experimental results. 
	\end{abstract}
	
	\section{Introduction}
	
	Learning-based control methods have been increasingly utilized in academia and industry to effectively incorporate data-driven learning methods into intelligent control frameworks. 
	In most applications, online feedback is used in conjunction with a data-driven method to improve (or provide~\cite{umlauft2018uncertainty}) a plant model, which is then used by the controller to compute the next control action. 
	The learning may occur simultaneously with the control evaluation or in sequential runs (i.e., iterations), depending on the system architecture and computational availability.

	In this work, 
	we focus on repetitive precision motion control tasks \revn{and learning in sequential iterations of the process}, where a motion stage is tasked with tracking a given reference with the minimum error possible (see~\cite{oomen2018advanced} for a literature survey). 
	Many precision motion control applications leverage the repetitive nature of the task to develop run-to-run learning controllers that utilize the system response in a previous run to improve the control signal in the next run, \revn{where the control signal cannot be adjusted during runs}. 
	Iterative Learning Control (ILC) has been extensively applied for many repetitive precision motion control problems in the literature due to its high performance \cite{wang2017newton,altin2014robust}. ILC uses the nominal model of an uncertain plant and provides robustness to disturbances and model mismatches by leveraging online measurements from the plant~\cite{amann1998predictive}. 
	Recent related work on run-to-run control includes \cite{colombino2019towards}, where the authors use a bounded linear approximation of a power system to derive robust convergence to an approximated fixed point.
	In~\cite{baumgartner2020zero}, the authors have adopted a zeroth-order approximation of a generic nonlinear plant response from a previous run to show local convergence (under bounded modeling error). In all the aforementioned works, robust convergence often results in suboptimality as a function of the nominal model mismatch.
	Additionally, the measurements are not leveraged for improving the plant model itself for improved performance. 
	
	\revn{Learning-based tracking control is applied to many simultaneous (online) learning applications, including adaptive control~\cite{cheah2006adaptive}, reinforcement learning~\cite{modares2014optimal}, and Gaussian state-space models~\cite{beckers2019stable,umlauft2018uncertainty}. 
	However, these methods often rely on run-time adjustments of control inputs, which may be impractical for the processes of interest in this work.}
	\revn{Additionally}, learning-based control methods have been proposed for modeling the mismatch between a nominal plant model and the true plant~\cite{hewing2019cautious,berkenkamp2015safe}. 
	By utilizing a data-driven representation of the mismatch between the nominal model and the true plant, improved performance of the learning-based controllers has been demonstrated in multiple applications. However, the mismatch modeling approach has not been applied to the run-to-run control problem to augment the nominal model and improve controller performance. \cite{li2020data} develops a data-driven gradient estimation scheme for repetitive precision motion control, but the proposed framework does not have formal stability guarantees.

	In this work, we propose a novel learning-based control framework that employs measurements from a previous run to augment the nominal plant model and provide convergence for an iterative precision motion control problem. 
	The main contributions are:
	\begin{itemize}[leftmargin=*]
		\item A new data-driven compensation method to improve the tracking accuracy of repetitive motion processes.
		\item A preliminary analysis of convergence \sout{and robustness} of the proposed method in the iteration domain.
		\item A case study to illustrate the utility of the proposed work and preliminary experimental implementation.
	\end{itemize}
	\revn{The rest of the paper is structured as follows.
	Section~\ref{sec:prelim} provides the preliminaries.
	Section~\ref{sec:controller} presents the proposed run-to-run (R2R) controller.
	Section~\ref{sec:caseStudy} provides a simulation case study and preliminary experimental results.
	Section~\ref{sec:conc} provides closing remarks and future directions.}
	
	\section{Preliminaries}
	\label{sec:prelim}
	
	\subsection{Problem Setting}
	Figure~\ref{fig:testbed} presents the control framework for the two-axis precision motion experimental testbed with
	the two axes (X-Y) and the tool tip (TT), which has the input-output dynamics (\ref{eq:dynSys}).
	We have a reference geometry, denoted by $r$, e.g., the octagonal geometry illustrated in Fig.~\ref{fig:testbed}, that we would like the TT to follow with minimal error. 
	We assume that the process is repetitive in that the system is expected to track the same reference geometry in subsequent \textit{runs}.
	The true output map of the precision motion system is modeled by a known nominal model and an initially unknown perturbation term.
	The \textit{nominal} system model is taken as linear time invariant 
	\begin{align}
		\label{eq:linSys}
		x_k(t+1) &= A x_k(t) + B u_k(t), 
	\end{align}
	where $t = 1,\ldots,n_k$ is the discrete time index of length $n_k\in\integers_+$,
	$k=1,2,\ldots$ is the iteration index, $x_k(t)\in\reals^{n_x}, u_k(t)\in\reals^{n_u}$ are the nominal state vector and control input of iteration $k$, respectively, and $A,B$ are appropriately sized matrices. 
	We assume fixed iteration length of $n_i\in\integers_+$ for simplicity, e.g. $n_k = n_i$ for all $k$.
	Then, the \textit{nominal-state-to-output measurement map} of the system is given by
	\begin{align}
	\label{eq:dynSys}
	y_k(t+1) = h(x_k(t), u_k(t)) + g(x_k(t), u_k(t)) + w(t),
	\end{align}
	where $y_k(t)\in\reals^{n_y}$ is the measured output at iteration $k$, $g$ is the unknown systematic perturbation to the output measurement map, parametrized by the nominal state and the input, $h$ is the nominal output model, and $w(t)$ denotes the noise term, which is assumed to be i.i.d. with the distribution $\textstyle{w(t)\sim \mathcal{N}(0, \Sigma^w)}$ and a diagonal $\Sigma^w$ to facilitate the proposed mismatch learning method. 
	Please note that we use the terms \textit{run} and \textit{iteration} interchangeably. %throughout the paper.
	
	We assume that $h$ and $g$ are iteration invariant. The nominal linear output model is given by $h(x_k(t),u_k(t)) = C(Ax_k(t)+Bu_k(t))$. 
	The systematic (and potentially nonlinear) perturbation $g$, entering the output map, is initially unknown and we aim to learn $g$ from output measurement corrupted by random noise. 
	This model structure is similar to Wiener systems in the literature~\cite{bai2008towards}. 
	By assuming the unknown perturbation is part of the output map (\ref{eq:dynSys}) and not the nominal system model (\ref{eq:linSys}), we simplify model complexity when predicting the true output by leveraging the nominal input-output map of the lifted system given below, since the future output predictions do not depend on the current output predictions. 
	The lifted input-output mapping is
	\begin{align}
	\label{eq:io_map}
	\bm{y}_k 
	=\! H\uvec_k + \bar{H}x(0) + G(\uvec_k, x(0)) + \bm{w},
	\end{align} 
	where $\textstyle{\bm{y}_k = [y_k(1)^T, \ldots, y_k(n_i)^T ]^T}$, \revn{$\bm{w}$ is defined similarly,} $\uvec_k = [u_k(0)^T, \ldots, \textstyle{u_k(n_i\!-\!1)^T} ]^T$, 
	\vspace{-0.1cm}
	\begin{align*}
	\small H
	\!=\! 
	\begin{bmatrix}
	CB	 &\!  & 0	\\
	\vdots & \ddots  &   \\
	CA^{n_i-1}B &\ldots\! &CB
	\end{bmatrix}\!,
	\bar{H}\!=\!
	\begin{bmatrix}
	CA\\
	\vdots\\
	CA^{n_i}
	\end{bmatrix},
	\end{align*}
	and $G(\uvec_k, x(0))$ denotes the output mismatch map as the concatenation of $g(x_k(t), u_k(t))$ for the lifted form.
	We assume that the initial condition $x_*(0)$ is the same for all runs, assumed to be zero to simplify the notation.
	
	\begin{figure}[t]
		\centering
		\includegraphics[width=\columnwidth]{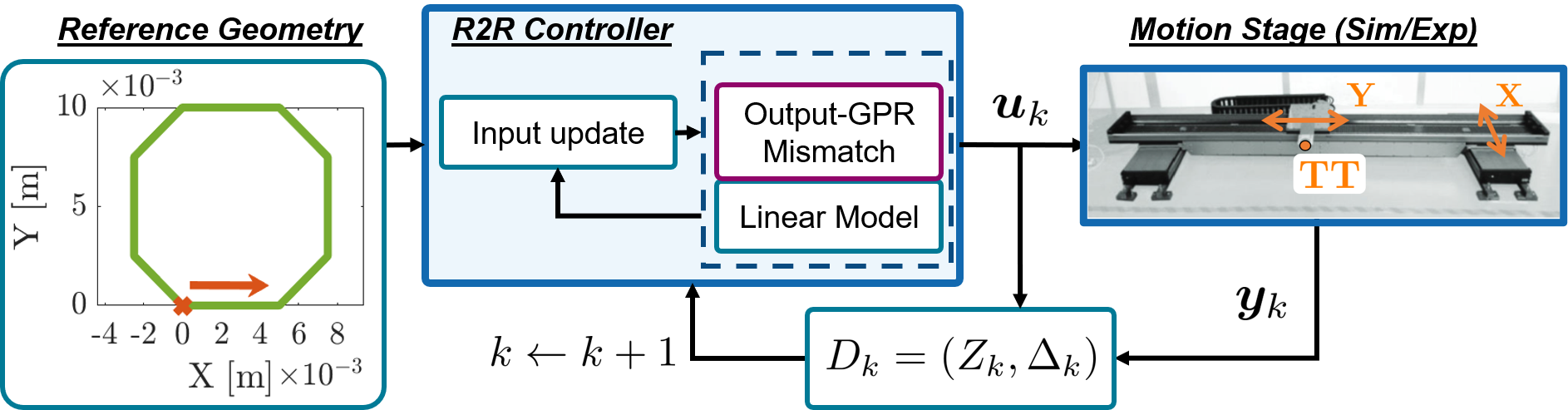}
%		\vspace{-0.2cm}
		\caption{Block diagram of the control setting in the case study. The experimental system block is replaced by the ``true'' system model for the simulation studies.}
		\label{fig:testbed}
		\vspace{-0.65cm}
	\end{figure}
	
	The \textit{control objective} is to improve the precision of the two dimensional (2D) motion stage by utilizing run-to-run (R2R) controller updates to have $\yvec_k \simeq \rvec$ as $k\to\infty$.
	We denote the parametrization of the reference trajectory with respect to its arc-length $s\in[0,L_r]$ as $r(s)$ and the 
	discretized representation of the reference signal as $\rvec\in\reals^{n_y}$.

	Several ways of quantifying tracking performance have been considered in the literature \cite{koren1991variable} 
	Here, two types of tracking error are considered. 
	The individual axis error for the discretized trajectory is defined as $e_k(j) = \rvec(j) - y_k(j)$,
	where we have $e_k(j) = [e_k^1(j), e_k^2(j)]$ for the errors of the two axes.
	The contouring error is defined as the minimum distance between an output point $y_k(j)$ and the reference trajectory $r(s)$.
	Let $\hat{s} = \argmin_s ||r(s) - y_k(j)||$ denote the curve-length that corresponds to the point on the reference closest to a current position $y_k(j)$.
	Then, the contouring error $\hat{e}^c(j)$ is defined as the distance between $r(\hat{s})$ and $y_k(j)$.

	\subsection{Gaussian Process Regression}
	\label{sec:gpr}
	In this work, we leverage the repetitive nature of the run-to-run process to learn the unknown system dynamics $g(x_k(t), u_k(t))$ from the input and output data of the system. 
	We apply Gaussian Process Regression (GPR) for modeling the output mismatch. 
	GPR can provide accurate estimates with small datasets, and uncertainty bounds on the posterior predictions, which we exploit in the proposed controller.
	
	Our dataset includes features of nominal state-input pairs, i.e., $(x^j, u^j)$, and noisy mismatch observations 
	\begin{align}
		\label{eq:delta}
	\delta^j = y^{j+1} - h(x^j, u^j) = g(x^j, u^j) + w^j.
	\end{align}
	Denote the feature vector as $\textstyle{z^j\!:=\! [(x^j)^T, (u^j)^T]^T}$, then the dataset is given by
	\begin{align*}
	D = \{ (Z, \Delta) ~|~ Z = [z^1,\ldots,z^{n_g}]^T, \Delta = [\delta^1,\ldots,\delta^{n_g}]^T\}.
	\end{align*}
	We call $Z$, the set of \textit{features} of the dataset, and $\Delta$, the \textit{observations}.
	Since the unknown function $g$ is vector-valued, we utilize multiple GPRs, one per output channel.
	Let $m\in\{1,\ldots,n_y\}$ denote a specific output dimension with a kernel function denoted with $k^m(\cdot,\cdot)$. 
	Then the GPR posterior predictions for a new feature $z$ are given by
%	\vspace{-0.1cm}
	\begin{align*}
	%\label{eq:gpr}
	\small
	\mu^m(z|D) &= K_{zZ}^m[K_{ZZ}^m+\sigma_{m}^2 \mathcal{I}]^{-1}[\Delta]_{\cdot m} \\
	\mathit{cov}^m(z|D) &= K_{zz}^m-K_{zZ}^m[K_{ZZ}^m+\sigma_{m}^2 \mathcal{I}]^{-1}K_{Zz}^m,
	\end{align*}
	where $K_{Zz}$ denotes a matrix of covariances 
	with $[K_{ZZ}]_{ij} = k(z^i, z^{j})$, $[\Delta]_{\cdot m}$ denotes the column of $\Delta$ for the output channel $m$,  $\sigma_{m}^2$ denotes the noise hyper-parameter of the GP, and we assume zero prior mean without loss of generality.
	The remaining matrices $K_{zZ}$, $K_{Zz}$, and $K_{zz}$ are evaluated similarly (e.g., see~\cite{hewing2019cautious}).
	Here, we utilize the squared-exponential kernel for all output channels,
	\begin{align}
	\label{eq:se}
	k^m(z^i,z^j) = \sigma_{s,m}^2 \exp\{-\textstyle{\frac{1}{2}}||z^i-z^j||^2_{\Lambda^{-1}_m} \},
	\end{align}
	where $\sigma_{s,m}$ and $\Lambda_m$ are the variance and length-scale hyper-parameters of the kernel, respectively. 
	\revn{Note that other kernels may be utilized instead.}
	The hyper-parameters $\theta^m =\{\sigma_{m}, \sigma_{s,m}, \Lambda_m \}$ for each channel are learned from training data by maximizing the log-likelihood of the observations, in an off-line preparation step.
	Note that learning $\sigma_{m}$ effectively corresponds to learning the noise variance of (\ref{eq:dynSys}).
	The learned hyper-parameters are identically fixed throughout each iteration and are unchanged between the iterations.

	Combining the GPs of each output channel, the resulting GP \revn{posterior distribution} for estimating the unknown dynamics $g$ at the test point $z$, given the data $D$ is  
	\begin{align}
	\label{eq:gp_pred}
	p(z|D) \sim \mathcal{N}(\mu(z|D),\sigma(z|D)),
	\end{align}
	where $\mu(z|D) = [\mu^1(z|D),\ldots,\mu^{n_y}(z|D)]^T$, and $\sigma(z|D) = \mathop{diag}\left(\mathit{cov}^1(z^*|D),\ldots,\mathit{cov}^{n_y}(z|D)\right)$, where $\mathop{diag}$ denotes a block diagonal matrix.
	The resulting output dynamics prediction including the GP approximation is given as
	\begin{align}
	\label{eq:ogpr}
	y_k^p(t+1) = h(x_k(t), u_k(t)) + p(x_k(t), u_k(t)|D),
	\end{align} 
	where \revn{$y_k^p$ is the predicted output trajectory, and (\ref{eq:ogpr}) is named as the output GPR (OGPR) method.}
	Similarly we denote $y^{\mu}_k$ as the mean prediction of (\ref{eq:ogpr}).
	OGPR simplifies our model complexity and enables us to predict in the lifted domain (\ref{eq:io_map}).
	Additionally, the future output predictions do not depend on the current output prediction, eliminating the need for uncertainty propagation.

	The dataset $D$ has a great effect on the quality (e.g., accuracy and uncertainty) of the GPR predictions. 
	In general, 
	whenever $z$ is close to a feature in $Z$, i.e. $\min_{\tilde{z}\in Z}||z - \tilde{z}||$ small, we have higher confidence in the predictive covariance.
	Our goal is to update the dataset $D$ after each run to improve the quality of the GPR predictions by utilizing the most relevant datapoints.
	We provide further remarks on the dataset update in later sections.

	\section{Run-to-Run Learning-Based Controller}
	\label{sec:controller}
	After each iteration, we evaluate the control action for the next iteration by utilizing the input and the measurements \revn{from} the current iteration. 
	The iterative process is given by 
	\begin{align*}
	\uvec_{k+1} = \Pi(\uvec_k, \yvec_k | D_k),
	\end{align*}
	where we utilize the dataset $\textstyle{D_k\!\in\!\mathcal{D}}$
	at iteration $k$ 
	to evaluate a new control action via the run-to-run control map $\textstyle{\Pi\!: \reals^{n_un_i}\!\times\!\reals^{n_yn_i}\!\times\!\mathcal{D}\to\!\reals^{n_un_i}}$. 
	We start by providing the update equations of the proposed controller, proceed with a preliminary analysis, and provide implementation details. 
	
	\subsection{Update evaluation}
	
	The proposed controller updates for the map $\Pi$ for each iteration are given by the following two steps
	\begin{subequations}
		\label{eq:r2r_updates}
		\begin{align}
		\vvec_{k+1} &= \uvec_k + W^{-1}(H^TQ(\evec_k) - S\uvec_k)	\label{eq:grad_step},\\
		\uvec_{k+1} &= \argmin_{\uvec\in\mathcal{U}}\{J(\yvec^p(\uvec)|D_{k})+\textstyle{\frac{1}{2\lambda}}||\uvec-\vvec_{k+1}||^2\} \label{eq:prox_step},
		\end{align}
	\end{subequations}
	where we have $\textstyle{\evec_{k} = \rvec-\yvec_k}$, $\textstyle{W = H^TQH+R+S}$ with $Q,R,S$ symmetric positive definite weight matrices of appropriate dimensions, $\textstyle{\lambda>0}$ is the regularization weight, $\mathcal{U}\subseteq\reals^{n_un_i}$ is a convex constraint set for the input, and finally $J(\yvec^p(\uvec)|D_{k})$ is a cost function for the contour error of the $\yvec^p(\uvec)$ predicted by the GP for input $\uvec$; details are provided below.
	The weight $S$ penalizes the control effort, while $R$ penalizes the variation of the input signal between iterations.
	We utilize the nominal lifted model to evaluate an approximate input signal in the evaluation step (\ref{eq:grad_step}) and utilize the output mismatch model to improve the input in (\ref{eq:prox_step}).
	The step (\ref{eq:grad_step}) is a norm-optimal ILC update (e.g.,~\cite{barton2010norm}), which is interconnected with (\ref{eq:prox_step}). 
	
	\subsubsection{Cost function}
	We denote $\bm{\epsilon}(\bm{y})$, a general purpose error penalty, which may be designed based on the application of interest (e.g. the individual axis or contour error discussed above). 
	We then define
	\begin{align}
	\label{eq:lifted_cost}
	J(\yvec^p(\uvec)|D) = ||\bm{\epsilon}(\yvec^{\mu}(\uvec))||^2_{Q_{\epsilon}}+ \Sigma(\uvec|D)
	\end{align}
	where $\bm{\epsilon}(\yvec^{\mu}(\uvec))$ is an error metric for the mean prediction 
	and $\textstyle{\Sigma(\uvec|D) = \sum_t tr(\tilde{Q}_{\epsilon}y^{\Sigma}(t))}$ denotes the sum of the traces of covariances for the predicted trajectory with $\tilde{Q}_{\epsilon}$ denoting the corresponding weights from $Q_{\epsilon}$.
	The cost function penalizes the tracking errors and the uncertainty over the output mismatch predictions. 

	\subsubsection{Dataset update}
	The dataset $D_k$, \revn{used in (\ref{eq:lifted_cost}),} is re-evaluated at each iteration to improve the predictive performance of the GPR.
	We use the inputs and measurements from the current iteration to evaluate the features $z_k(t) =[x_k(t)^T,u_k(t)^T]^T$ and the mismatch observations $\delta_k(t) = y_k(t+1) - h(z_k(t))$ as in (\ref{eq:delta}).
	
	The choice of dataset update rule depends on the expected change in process mismatch between runs and computational resources available.
	For rapidly changing processes, old data may deteriorate prediction accuracy; whereas for a slowly changing (or stationary) process in the R2R domain, keeping the data from a few or all past runs may be desirable.
	However as the dataset gets larger, the computational cost of predictions increases as well, thus a balanced design must be evaluated for the specific application at hand.
	Here, we adopt a dataset update rule that uses \textit{only} the data from the most recent run for GPR predictions.
	Loosely speaking, due to the repetitive nature of the problem, the data sampled in the current iteration is expected to provide a high-performing dataset for predicting the mismatch in the next iteration.
	The dataset is then given by 
	\begin{align}
	\label{eq:dataset_update}
	D_k = (Z_k, \Delta_k),
	\end{align}
	where we have the features $\textstyle{Z_k\!=\![z_k(0),\ldots,z_k(n_i\!-\!1)]^T}$, and the observations  $\textstyle{\Delta_k=[\delta_k(0),\ldots,\delta_k(n_i\!-\!1)]^T}$. 
	Convergence properties of (\ref{eq:grad_step})-(\ref{eq:prox_step}) with the update rule (\ref{eq:dataset_update}) are studied in the next section.

	\subsection{Analysis}
	\label{sec:analysis}
	We state the convergence properties of (\ref{eq:r2r_updates}) by analyzing the two steps (\ref{eq:grad_step}) and (\ref{eq:prox_step}) individually.
	\subsubsection{Update (\ref{eq:grad_step})}
	\sout{First,} We assume that the true operator of (\ref{eq:io_map}), denoted by $H_t$ lies in a \revn{unstructured norm-bounded} uncertainty set around the nominal dynamics model $H$.
	Following~\cite{barton2010norm}, we consider the optimization problem
	\vspace{-0.05cm}
	\begin{align}
	\label{eq:ia_error}
	\min\limits_{\bm{\phi}}\{ ||\bm{e}_{k+1}(\bm{\phi})||^2_Q + ||\bm{\phi}||^2_S +  ||\bm{\phi}-\bm{u}_{k}||^2_R\},
	\end{align}
	where we have the error model $\textstyle{\bm{e}_{k+1}(\bm{\phi}) = \bm{r}-(H\bm{\phi}+\bm{w})}$ for an input $\bm{\phi}$, and 
	the previous control input $\bm{u}_k$.
	If we \revn{adopt a common assumption} that the disturbance $\bm{w}$ is approximately repeatable in the lifted R2R domain during the experiments, then by utilizing the tracking error from measurements, step (\ref{eq:grad_step}) in isolation (i.e., $\uvec_{k+1}$ set as $\bm{v}_{k+1}$) converges \revn{monotonically} to a fixed point. 
	\revn{Thus, (\ref{eq:grad_step}) in isolation utilizes measurements to minimize (\ref{eq:ia_error}) by utilizing the assumption on the relationship between the model $H$ and $H_t$.}
	Refer to~\cite{barton2010norm} for further details \revn{and derivations}.
	Therefore, we have the step (\ref{eq:grad_step}) as a contraction.
	The fixed point of the robust ILC iteration (\ref{eq:grad_step}) is an approximation of the true fixed point due to model mismatch.
	Thus, we utilize the mismatch compensation through the GPR predictions to improve the steady state error 
	in (\ref{eq:prox_step}).
	
	\subsubsection{Update (\ref{eq:prox_step})}

	Under the assumption that the GPR hyper-parameters are \revn{accurate approximations of} the true hyper-parameters \revn{(so that the GPR predictions are accurate),} we can predict the dataset for the next iteration (i.e., predicting for $\textstyle{k+1}$ at run $k$) by utilizing the mean output predictions of (\ref{eq:ogpr}), given low observation noise.
	The application domain of precision motion control often employs high-performance sensing, which justifies the low noise assumption.
	Consider the update (\ref{eq:prox_step}) in isolation (i.e., without the step (\ref{eq:grad_step})), rewritten in the variable $\xi$ as
	\vspace{-0.1cm}
	\begin{align}
	\label{eq:prox_only}
	\hat{\xi}_{k+1}= \argmin\limits_{\xi\in\mathcal{U}}\{J(\yvec^p(\xi))|D) + \textstyle{\frac{1}{2\lambda}}||\xi\!-\! \hat{\xi}_k||^2 \}.
	\end{align}
	We state the following properties for (\ref{eq:prox_only}) and then sketch the proof of convergence for the combined (\ref{eq:grad_step})-(\ref{eq:prox_step}).
	\begin{proposition}
		\label{prop:bounded_cost}
%		We have
		$\textstyle{J(\yvec^p(\hat{\xi}_{k+1})|D_{k+1}) \leq J(\yvec^p(\hat{\xi}_{k+1})|D_{k})}$ holds for the dataset update (\ref{eq:dataset_update}) in expectation over the predicted observations given  the true hyper-parameters ($\bm{\theta}$) and low observation noise.
	\end{proposition}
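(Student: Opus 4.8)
The plan is to split the cost (\ref{eq:lifted_cost}) into its mean-tracking part $\norm{\bm{\epsilon}(\yvec^{\mu}(\uvec))}^2_{Q_{\epsilon}}$ and its uncertainty part $\Sigma(\uvec|D)$, evaluate both at $\uvec=\hat{\xi}_{k+1}$ under $D_{k+1}$ and under $D_k$, and show that the reduction in the uncertainty part obtained by switching to $D_{k+1}$ offsets, in expectation, the increase in the mean part. The structural observation that drives everything is that, since the nominal state in (\ref{eq:linSys}) is propagated deterministically and the noise enters only the output map (\ref{eq:dynSys}), the features $Z_{k+1}=[z_{k+1}(0),\ldots,z_{k+1}(n_i-1)]^T$ produced by the update rule (\ref{eq:dataset_update}) for the input $\hat{\xi}_{k+1}$ are \emph{exactly} the test points at which the GPR is queried when forming $\yvec^{p}(\hat{\xi}_{k+1})$. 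Hence in $J(\yvec^{p}(\hat{\xi}_{k+1})|D_{k+1})$ every test point is also a training feature, whereas in $J(\yvec^{p}(\hat{\xi}_{k+1})|D_k)$ the same test points are scored against the previous-iteration data $D_k$.

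First I would handle the uncertainty term. Using the standard GPR fact that conditioning on a noisy observation at a point caps the posterior variance there, and that conditioning on further data only decreases it, the squared-exponential kernel (\ref{eq:se}) gives $\mathit{cov}^m(z_{k+1}(t)|D_{k+1})\le\sigma_{s,m}^2\sigma_{m}^2/(\sigma_{s,m}^2+\sigma_{m}^2)\le\sigma_{m}^2$ for every $t$ and channel $m$. Since $\sigma_{m}^2$ is the learned noise variance of (\ref{eq:dynSys}), the low-observation-noise assumption makes $\Sigma(\hat{\xi}_{k+1}|D_{k+1})$ negligible, while $\Sigma(\hat{\xi}_{k+1}|D_k)\ge 0$; this settles the uncertainty portion of the claimed inequality.

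Next, for the mean-tracking term I would take the expectation over the yet-to-be-measured observations $\Delta_{k+1}$, modelling $\delta_{k+1}(t)=g(z_{k+1}(t))+w^{k+1}(t)$ with $g$ a sample of the GPR posterior given $D_k$, so that under the true-hyper-parameter assumption $\Expect[\Delta_{k+1}\mid D_k]$ is the GPR posterior mean at $Z_{k+1}$ given $D_k$ and the covariance of $\Delta_{k+1}\mid D_k$ is the GPR posterior covariance at $Z_{k+1}$ given $D_k$ plus $\Sigma^w$ — and the latter covariance is precisely the one entering $\Sigma(\hat{\xi}_{k+1}|D_k)$. Because the GPR mean is linear in the observations and, with test points equal to training points and low noise, essentially interpolates them (so $\yvec^{\mu}(\hat{\xi}_{k+1})|D_{k+1}\approx H\hat{\xi}_{k+1}+\bar{H}x(0)+\Delta_{k+1}$), it follows that $\Expect[\yvec^{\mu}(\hat{\xi}_{k+1})|D_{k+1}\mid D_k]\approx\yvec^{\mu}(\hat{\xi}_{k+1})|D_k$ and that the conditional covariance of $\yvec^{\mu}(\hat{\xi}_{k+1})|D_{k+1}$ matches the GPR posterior covariance at $Z_{k+1}$ given $D_k$. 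The key identity is then a bias--variance decomposition for the affine (or locally linearized) penalty $\bm{\epsilon}$,
\begin{align*}
\Expect\left[\norm{\bm{\epsilon}(\yvec^{\mu}(\hat{\xi}_{k+1})|D_{k+1})}^2_{Q_{\epsilon}}\,\Big|\,D_k\right]\approx\norm{\bm{\epsilon}(\yvec^{\mu}(\hat{\xi}_{k+1})|D_k)}^2_{Q_{\epsilon}}+\Sigma(\hat{\xi}_{k+1}|D_k),
\end{align*}
i.e.\ exactly the two terms that constitute $J(\yvec^{p}(\hat{\xi}_{k+1})|D_k)$. Adding the negligible uncertainty term from the previous step yields $\Expect[J(\yvec^{p}(\hat{\xi}_{k+1})|D_{k+1})\mid D_k]\le J(\yvec^{p}(\hat{\xi}_{k+1})|D_k)$.

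I expect the last two steps to be the main obstacle, especially for a genuinely nonlinear error metric such as the contour error: the bias--variance cancellation — whereby the extra output variance picked up by the mean term equals precisely the $\Sigma(\hat{\xi}_{k+1}|D_k)$ that is shed from the uncertainty term — requires linearizing $\bm{\epsilon}$ around the predicted trajectory, which is what the weight $\tilde{Q}_{\epsilon}$ in (\ref{eq:lifted_cost}) is meant to encode, and one must also argue that the residual $O(\norm{\Sigma^w})$ terms (GPR shrinkage away from exact interpolation, and the added measurement-noise covariance) are dominated. This is exactly why the statement is phrased in expectation and under low observation noise, and why the analysis is only preliminary.
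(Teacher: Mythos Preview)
Your proposal is correct and shares the decisive structural observation with the paper: because the nominal state in (\ref{eq:linSys}) evolves deterministically, the feature set $Z_{k+1}$ generated by (\ref{eq:dataset_update}) under the input $\hat{\xi}_{k+1}$ coincides exactly with the test locations at which the GPR is queried when forming $J(\yvec^p(\hat{\xi}_{k+1})|\cdot)$. From this point on, however, the two arguments differ in how they account for the randomness of the future observations.

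The paper takes the shorter route. It replaces the yet-unobserved $\Delta_{k+1}$ by its predictive expectation (invoking the true hyper-parameters and low noise), observes that with test points equal to features the GPR mean then reproduces $\yvec^{\mu}(\hat{\xi}_{k+1})|D_k$, and concludes that the mean-tracking term $\|\bm{\epsilon}(\yvec^{\mu}(\hat{\xi}_{k+1}))\|^2_{Q_{\epsilon}}$ is the \emph{same} under $D_{k+1}$ and $D_k$. The inequality then comes entirely from the covariance part, for which the paper simply notes that the GPR posterior variance depends only on feature locations and is smallest when the test point is itself a feature, giving $\Sigma(\hat{\xi}_{k+1}|D_{k+1})\le\Sigma(\hat{\xi}_{k+1}|D_k)$. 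In effect the paper uses the low-noise hypothesis to justify swapping expectation and the quadratic $\|\bm{\epsilon}(\cdot)\|^2$.

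Your route is more careful about second moments: you keep $\Delta_{k+1}$ random, use near-interpolation to write $\yvec^{\mu}(\hat{\xi}_{k+1})|D_{k+1}$ as an affine function of $\Delta_{k+1}$, and then apply a bias--variance identity so that the variance picked up by the mean-tracking term is \emph{exactly} $\Sigma(\hat{\xi}_{k+1}|D_k)$, while $\Sigma(\hat{\xi}_{k+1}|D_{k+1})$ is driven to the noise floor. This makes transparent that the claimed inequality is really an approximate equality up to $O(\|\Sigma^w\|)$, and it exposes the need for (local) affineness of $\bm{\epsilon}$, which you correctly flag as the delicate point for contouring error. The paper buys simplicity by burying that same approximation in the low-noise caveat; your version buys a clearer error budget at the cost of an extra linearization assumption.
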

	\begin{proof}
		Using the output prediction in (\ref{eq:ogpr}) with the true $\bm{\theta}$ we predict the observed data in the next iteration ($k+1$) to
		be $\yvec^{\mu}(\hat{\xi}_{k+1})$ in expectation, from which the observations $\Delta_{k+1}$ for the dataset $D_{k+1}$ follows.
		Since both costs are evaluated with the same input, $\hat{\xi}_{k+1}$, the cost $J(\yvec^p(\hat{\xi}_{k+1})|D_{k+1})$ has predictions at the locations of the features in $D_{k+1}$.
		Thus, we approximate the predicted expectation to be the observations with low noise and get the error of the expected trajectory $\textstyle{||\bm{\epsilon}(\yvec^{\mu}(\hat{\xi}_{k+1}))||^2_{Q_{\epsilon}}}$ equal on both costs. 
		Following the same reasoning and noting that the prediction uncertainty (i.e., the covariance matrix) is a function of the locations of the features 
		we have $\Sigma(\hat{\xi}_{k+1} | D_{k+1}) \leq \Sigma(\hat{\xi}_{k+1} | D_{k})$.
		Combining the stated arguments we arrive at the desired result.
	\end{proof}
	\begin{proposition}
		\label{prop:prox}
		The optimal value of the cost function 
		in (\ref{eq:prox_only}) is nonincreasing in the iteration domain in expectation over the predicted observations such that
		\begin{align*}
		J(\yvec^p(\hat{\xi}_{k+1})|D_{k}) - J(\yvec^p(\hat{\xi}_{k})&|D_{k-1}) \leq -\textstyle{\frac{1}{2\lambda}}||\hat{\xi}_{k+1}-\hat{\xi}_{k}||^2.
		\end{align*}
	\end{proposition}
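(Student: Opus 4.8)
The plan is to combine a textbook sufficient-decrease estimate for the proximal-type update~(\ref{eq:prox_only}) with Proposition~\ref{prop:bounded_cost}, which lets us shift the dataset index by one.

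First I would exploit the optimality of $\hat{\xi}_{k+1}$ in~(\ref{eq:prox_only}). The iterate $\hat{\xi}_k$ is the minimizer of the previous subproblem over $\mathcal{U}$, hence $\hat{\xi}_k\in\mathcal{U}$ and is a feasible competitor in the subproblem defining $\hat{\xi}_{k+1}$, which uses the dataset $D_k$. Evaluating the objective of~(\ref{eq:prox_only}) at $\xi=\hat{\xi}_k$, where the quadratic term vanishes, gives
\begin{align*}
J(\yvec^p(\hat{\xi}_{k+1})|D_k) + \textstyle{\frac{1}{2\lambda}}||\hat{\xi}_{k+1}-\hat{\xi}_k||^2 \leq J(\yvec^p(\hat{\xi}_k)|D_k),
\end{align*}
that is, a one-step decrease of $J(\cdot|D_k)$ by at least $\textstyle{\frac{1}{2\lambda}}||\hat{\xi}_{k+1}-\hat{\xi}_k||^2$, but with both costs evaluated against the \emph{same} dataset $D_k$.

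Next I would reconcile the dataset indices so that the right-hand side matches the statement. Applying Proposition~\ref{prop:bounded_cost} one iteration earlier (replace $k$ by $k-1$ throughout its statement) gives $J(\yvec^p(\hat{\xi}_k)|D_k)\leq J(\yvec^p(\hat{\xi}_k)|D_{k-1})$ in expectation over the predicted observations, under the same true-hyperparameter and low-noise hypotheses. Substituting this bound into the previous display and rearranging yields exactly
\begin{align*}
J(\yvec^p(\hat{\xi}_{k+1})|D_{k}) - J(\yvec^p(\hat{\xi}_{k})|D_{k-1}) \leq -\textstyle{\frac{1}{2\lambda}}||\hat{\xi}_{k+1}-\hat{\xi}_{k}||^2,
\end{align*}
which is the claim.

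The main obstacle is not the algebra but keeping the probabilistic bookkeeping consistent: the prox optimality above holds pathwise for any realized $D_k$, whereas Proposition~\ref{prop:bounded_cost} is an in-expectation statement over the predicted observations that define $D_k$ (and $D_k$ itself depends on $\hat{\xi}_k$ through those predictions). I would make this precise by taking the expectation of the sufficient-decrease inequality and noting that $\hat{\xi}_{k+1}$ and $\hat{\xi}_k$ are determined by the same predicted data invoked in Proposition~\ref{prop:bounded_cost}, so the decrease term passes through the expectation unchanged; for the purposes of this preliminary analysis one may equivalently read every quantity as its expected value under the stated assumptions. A minor technical point worth one sentence is attainment of the minimizers in~(\ref{eq:prox_only}), which follows from convexity (and closedness) of $\mathcal{U}$ together with the regularizing quadratic term.
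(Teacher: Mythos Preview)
Your argument is correct and essentially identical to the paper's own proof: compare $\hat{\xi}_{k+1}$ against the feasible point $\hat{\xi}_k$ in~(\ref{eq:prox_only}) to obtain the sufficient-decrease inequality with dataset $D_k$, then invoke Proposition~\ref{prop:bounded_cost} (shifted by one index) to replace $J(\yvec^p(\hat{\xi}_k)|D_k)$ by $J(\yvec^p(\hat{\xi}_k)|D_{k-1})$. Your additional remarks on the probabilistic bookkeeping and on attainment of the minimizer go beyond what the paper spells out but are consistent with its intent.
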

	\begin{proof}
%		We first note that 
		Due to the regularization term $||\xi\!-\! \hat{\xi}_k||^2$ in (\ref{eq:prox_only}), for the minimizer $\hat{\xi}_{k+1}$ we have that 
		\begin{align*}
		J(\yvec^p(\hat{\xi}_{k+1})|D_{k}) +\textstyle{\frac{1}{2\lambda}}||\hat{\xi}_{k+1}-\hat{\xi}_k||^2 &\leq J(\yvec^p(\hat{\xi}_{k})|D_{k}), \\		
		&\leq J(\yvec^p(\hat{\xi}_{k})|D_{k-1}),
		\end{align*}
		where we used the optimality condition for the optimization in (\ref{eq:prox_only}), and Proposition~\ref{prop:bounded_cost}.
	\end{proof}

	We only sketch the convergence proof of (\ref{eq:grad_step})-(\ref{eq:prox_step}) due to space limitations. 
	The results outlined in the previous section imply that the updates in (\ref{eq:r2r_updates}) are the interconnection of (\ref{eq:grad_step}), a contraction, and (\ref{eq:prox_step}), an optimization step with nonincreasing cost in the iteration domain. 
	Observe that whenever both (\ref{eq:grad_step}) and (\ref{eq:prox_step}) share the same term in their cost function, i.e., $||\bm{\epsilon}(\yvec^{\mu}(\uvec))||^2_{Q_{\epsilon}}$, Proposition~\ref{prop:prox} holds for the interconnection whenever $||\vvec_{k+1}-\uvec_k||\leq \beta$, for some $\beta>0$.
	Then, by utilizing an invariance argument (e.g., LaSalle's invariance principle) 
	for the interconnection, it can be shown that the system response under the updates in (\ref{eq:grad_step})-(\ref{eq:prox_step}) is convergent.
	The fixed point of 
	the interconnection 
	improves the steady state error of the step (\ref{eq:grad_step}) alone, since (\ref{eq:prox_step}) optimizes for the true model of the plant, given the dataset $D_k$.
	We illustrate this improvement 
% 	on the steady state error 
	in the case study.

	\subsection{Implementation details for 2D contouring control}
	
	In this section, we focus on the case when $\bm{\epsilon}$ denotes the contouring error. 
	Following \cite{liniger2019real}, we utilize a path parameter $\textstyle{s(t+1) = s(t) + T\varphi(t)}$ to approximate $\hat{s}$, where $T$ is the sampling time and $\varphi(t)$ is the velocity along the path.
	Introduction of $s(t)$ inherently creates a lag-error $e^{\ell}(t)$ resulting from the mismatch between the true minimizer arc parameter $\hat{s}(t)$ and $s(t)$ illustrated in Fig.~\ref{fig:errors}.
	A linear approximation of the true contour error $\hat{e}^c(t)$ denoted as $e^c$ and lag error $e^{\ell}$ along the path with respect to $s(t)$ is 
	\begin{subequations}
		\label{eq:errors}
% 		\small
		\begin{align}
% 		\small
			e^{\ell}&(t+1) = r'_1(s(t))(r_1(s(t)) -  y_1(t+1)) \notag \\
			&+r'_2(s(t))(r_2(s(t)) -  y_2(t+1)) + T\varphi(t),\label{eq:lagerror} \\
			e^c&(t+1) = -r'_2(s(t))(r_1(s(t)) -  y_1(t+1)) \notag \\
			&+r'_1(s(t))(r_2(s(t)) -  y_2(t+1)), \label{eq:conerror}
		\end{align}
	\end{subequations}
	where the path variable is $\bm{r}(s) = [r_1(s), r_2(s)]^T$, $\bm{r}'(s) = [r'_1(s), r'_2(s)]^T$ is the normalized path derivative,
	and the subscripts $1$ and $2$ denote the axes of the 2D 
	system.
	\begin{figure}[h]
		\centering
		\includegraphics[width=0.4\columnwidth]{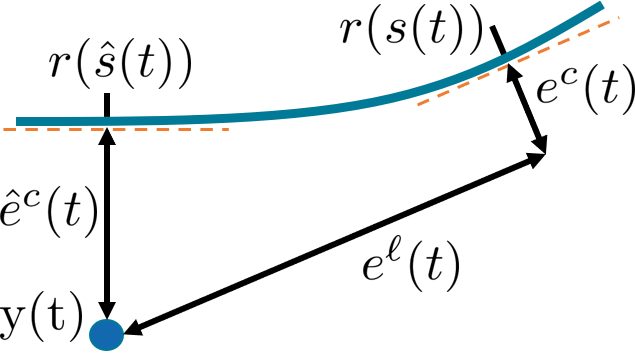}
		\vspace{-0.2cm}
		\caption{Illustration of the contour and lag error approximations.}
		\label{fig:errors}
				\vspace{-0.35cm}
	\end{figure}
	
	To improve the accuracy of the contour error approximation, we require $e^{\ell}$ to be very small, thus we apply a higher penalty than $e^c$.
	The linearized error metric is \revn{then} given as
	\begin{align}
		\label{eq:epsilons}
		&\tilde{\epsilon}(t+1) =\! \Phi(s(t))[y^{\mu}(t+1)^T,\varphi(t)]^T\! +\! c(s(t)),
	\end{align}
	where $\tilde{\epsilon}(t) = [e^{\ell}(t), e^c(t)]^T$, with $\Phi(s(t))\in\reals^{2\times 3}$ and $c(s(t))\in\reals^{2}$ are the corresponding linear and affine terms from (\ref{eq:lagerror})-(\ref{eq:conerror}), and $y^{\mu}(t)$ is the mean prediction of system output (\ref{eq:ogpr}), given the nominal state and input.
	
	The optimization step in (\ref{eq:prox_step}) is a nonlinear and nonconvex output control problem (OCP). 
	Utilizing the GP approximation of the output map, we can write the OCP as
	\vspace{-0.1cm}
	\begin{align*}
	\min\limits_{\uvec\in\mathcal{U},\bm{\varphi}\in\mathcal{P}}~&  \textstyle{\frac{1}{2\lambda}}||\uvec-\vvec_{k+1}||^2\! +\!\textstyle{\sum_{t=1}^{n_i}} ||\tilde{\epsilon}(t)||^2_{Q_{e}}\!+\!tr(Q_{e}y^{\Sigma}(t))\\
	\text{s.t.:}~& 
	y^p(t+1) = h(x(t), u(t)) + p(x(t), u(t)|D_k) \\
	&x(0) = x_*(0),\quad t = 0,\ldots,\textstyle{(n_i-1)}, \\
	& s(t+1) = s(t) + T\varphi(t), \quad s(t) \in [0,1]
	\end{align*}
	where $\mathcal{P}$ is the path velocity constraint set \revn{and the rest of the terms are introduced earlier}. 
	To improve the computational performance, following~\cite{liniger2019real,rupenyan2021performance} we resort to a receding horizon optimization procedure to solve the OCP.
	The resulting receding horizon problem is given by
		\vspace{-0.1cm}
		\begin{align}
		\min\limits_{\tilde{\uvec}\in\tilde{\mathcal{U}},\tilde{\bm{\varphi}}\in\tilde{\mathcal{P}}}& \tilde{J}_f(\tilde{\epsilon}(\bar{N}) | D )\!+\!\textstyle \textstyle{\frac{1}{2\lambda}}||\tilde{u}(\bar{N})-v_{k+1}(\bar{N})||^2 \notag \\
		&+\textstyle\sum_{t=t'}^{\bar{N}-1}\! \tilde{J}_t(\tilde{\epsilon}(t) | D )\!+\!\textstyle{\frac{1}{2\lambda}}||\tilde{u}(t)\!-\!v_{k+1}(t)||^2\notag \\
		\text{s.t.: } & y^p(t+1) = h(x(t), \tilde{u}(t)) + p(x(t), \tilde{u}(t)|D) \notag \\
		&x(t') = \bar{x}(t'),s(t') = \bar{s}(t')\quad D = D_k(a,b,\alpha),\notag \\
		& s(t+1) = s(t) + T\tilde{\varphi}(t), \quad s(t) \in [0,1] \label{eq:mpcc}
		\end{align}
	where the constraint sets $\tilde{\mathcal{U}}$ and $\tilde{\mathcal{P}}$ are adjusted for the horizon length, and $\bar{x}(t'), \bar{s}(t')$ are the initial conditions at each $t'$.
	We use the shorthand $\bar{N}\! =\! t'\!+\!N$ with the current time $t'$ and the horizon length $N$, and define the cost function
	\begin{align*}
	\tilde{J}_t(\tilde{\epsilon}(t) | D) &= ||\tilde{\epsilon}(t)||^2_{Q_e} \!+\!tr(Q_e y^{\Sigma}(t))- \gamma\varphi(t) + || u(t)||^2_{Q_u},
	\end{align*}
	\revn{with $\textstyle{\gamma \in\reals_+}$, and positive semidefinite weights $\textstyle{Q_u,Q_e}$,} 
	where the dependency on $D$ has been omitted to simplify the notation.
	We define the dataset 
	\begin{align}
	D_k(a,b,\alpha) = \{(z_k(\ell),\delta_k(\ell)~|~ \ell = [\alpha+a, \alpha+b] \}.
	\end{align} 
	While utilizing the receding horizon computation for the OCP, we also utilize a fixed window size of data points from the dataset $D_k$ to improve the computational performance of the GPR within the optimization problem (\ref{eq:mpcc}).
	A common choice for updating the dataset $D$ when there is no previous iteration data is to use $\textstyle{\alpha = t'}$, $\textstyle{a = -n_w-1}$, and $b = -1$, with $n_w$ as the window size.
	However, since we have access to full data from the previous iteration (see (\ref{eq:dataset_update})), we instead utilize $\alpha = \arg\min_j \{|s(t') - s_{k}(j)|\}$ 
	with $\bm{s}_{k}=[s_k(1),\ldots,s_k(n_i)]$ as the vector of path parameter values from the previous iteration, and we take $a = -n_w/2$, $b = n_w/2$ for improved prediction performance, while keeping the learned hyper-parameters constant for the whole geometry on all iterations as explained in Section~\ref{sec:gpr}.

	By solving (\ref{eq:mpcc}) in a receding horizon fashion and denoting the first optimal input of each step by, $u^*(t')$, we evaluate the optimal control input $\textstyle{\uvec_{k+1} = [u^*(0),\ldots,u^*(n_i\!-\!1)]^T}$ for the next iteration in (\ref{eq:r2r_updates}).
	The implementation of the proposed R2R controller is given in Algorithm~\ref{alg:r2r_algo}.
	\vspace{-0.2cm}
		\begin{algorithm}[h]
			
		\caption{R2R control with mismatch compensation}
		\label{alg:r2r_algo}
		\begin{algorithmic}[1]
			\State \textbf{Input:} $\uvec_0$, $\theta$, $Q,R,S,T,W,r,Q_{\epsilon},\lambda, \rvec$, 
			\State $k\gets 0$, $\uvec_{k}\gets \uvec_0$
			\While {$||\uvec_{k+1} - \uvec_k|| \geq \epsilon$ or $k \leq k_{max}$}
			\State Apply $\uvec_k$ to the real system (\ref{eq:io_map}) and measure $\yvec_k$.
			\State Evaluate $\vvec_{k+1}$ via (\ref{eq:grad_step}).
			\State Update the dataset $D_k$ as given in (\ref{eq:dataset_update}).
			\State Solve (\ref{eq:mpcc}) in a receding horizon and evaluate $\uvec_{k+1}$.
			\State $k\gets k+1$
			\EndWhile
		\end{algorithmic}	
	\end{algorithm}
	\vspace{-0.5cm}
	\section{Performance Results}
	\label{sec:caseStudy}

	Here we present a simulation study and preliminary experimental results to illustrate the \revn{controller} performance. 
	
	\subsection{Setup}
	For the simulation study, we identify two linear models from experimental data; a high-fidelity true system model to represent the true input-output map of the system in (\ref{eq:dynSys}), and a low-fidelity nominal controller model for (\ref{eq:linSys}).
	The nominal model is a single input single output (SISO) model for each axis (position input, position output) and identical for both axes for simulation purposes, resulting in a two-input two-output system.
	\revn{Noise with zero mean and $1$ micron standard deviation is added to the output measurements in simulation.}

	The experimental setup in Fig.~\ref{fig:testbed} was used for preliminary experimental results. 
	\revn{Input} trajectories \revn{defining position references for the machine} at each iteration are computed by the controller and pushed to the machine. 
	The position measurements \revn{of both axes are reported at the end of each run.}
	Further details about the low-level controllers and the setup of the testbed are presented in~\cite{haas2019mpcc}.
	The system has two axes with inconsistent dynamics resulting in high model mismatch, especially at corner geometries such as the ones in the octagonal reference.
	\revn{2-state} SISO nominal models per axis are obtained from experimental data.

	In the case study, we  \revn{also} utilize an approximation of the OGPR term in (\ref{eq:mpcc}), which we denote with \textit{A-GPR}.
	In the approximation, we utilize \revn{a} previous solution trajectory of (\ref{eq:mpcc}) to predict the output mismatch term in (\ref{eq:epsilons}) \revn{as a} constant during the optimization, resulting in a quadratic program. 
	This approximation is reasonable whenever solution trajectories do not vary too much between consecutive \revn{solutions} of (\ref{eq:mpcc}).
	We incorporate additional state constraints in (\ref{eq:mpcc})
	to force the solution trajectories to be close to their approximation points. 
	
	\subsection{Results}

	\subsubsection{Simulation} We compare the convergence performance of several controllers. 
	The \textit{feed-forward controller} (FF) uses only the nominal linear model of the system. 
	The ILC uses only the update (\ref{eq:grad_step}) without mismatch compensation, penalizing the individual axis error.
	The \textit{A-GPR} controller uses \revn{the approximation described in the previous section.} 
	Finally, the controller denoted by \textit{F-GPR} solves \revn{(\ref{eq:mpcc}) with the full GPR model.} 
	We initialize the controllers with the $\uvec_0$ corresponding to the reference geometry itself shown in Fig.~\ref{fig:testbed}.
	\begin{figure}[t]
		\centering
		\includegraphics[width=0.75\columnwidth]{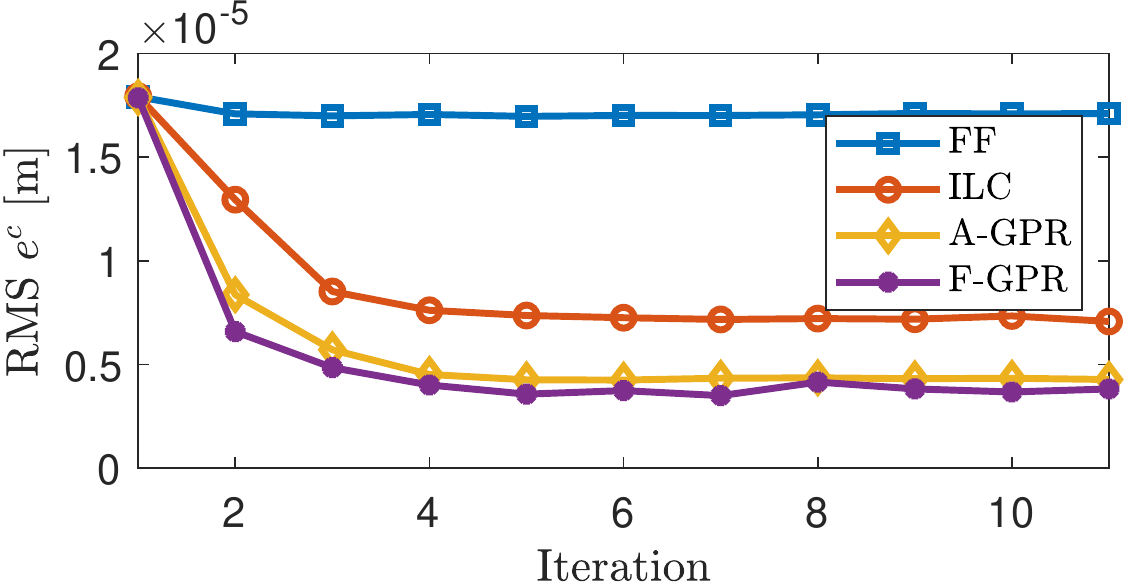}
		\vspace{-0.2cm}
		\caption{Comparison of various controllers for the simulation results.}
		\label{fig:sim}
				\vspace{-0.65cm}
	\end{figure}
	The simulation results are shown in Fig.~\ref{fig:sim} and the approximate percent improvements of the converged controllers are in Table~\ref{tab:simresults}. 
	We see that the FF controller leads to minimal improvement due to the model mismatch. 
	The Full-GPR solutions show the best result by improving the RMS error by $79\%$ overall with the A-GPR controller performing comparably. 
	With our current implementation on MATLAB using \texttt{quadprog} and \texttt{fmincon} as solvers requires $30.6\pm 1.4$ seconds for A-GPR with $N=50$ and $876\pm 67$ seconds for the F-GPR with $N=10$ 
	for a single iteration on a PC with Intel i7-7700HQ CPU.
	We note that higher horizon length for F-GPR may further improve the controller performance \revn{at a higher computational cost}.
	
	\renewcommand{\arraystretch}{0.05}
	\begin{table}[h]
		\label{tab:simresults}
		\vspace{-0.3cm}
		\caption{Percent improvement results for Fig.~\ref{fig:sim}.}
		\vspace{-0.5cm}
		\begin{center}
			\begin{tabular}{@{} c c c c@{}}
				\toprule
				\makebox[1cm]{FF} & \makebox[1cm]{ILC} & \makebox[1cm]{A-GPR}&\makebox[1cm]{F-GPR}\\
				\cmidrule(lr){1-4}
				$4.8\%$ &$59.6\%$ & $75.9\%$ & $79.0\%$\\
				\bottomrule
			\end{tabular}
		\end{center}
		\vspace{-0.5cm}
	\end{table}
	\renewcommand{\arraystretch}{1}

	\subsubsection{Experimental}
	For the experimental results, we utilize the A-GPR controller with $N=50$.

	\begin{figure}[h]
		\vspace{-0.28cm}
		\centering
		\includegraphics[width=0.7\columnwidth]{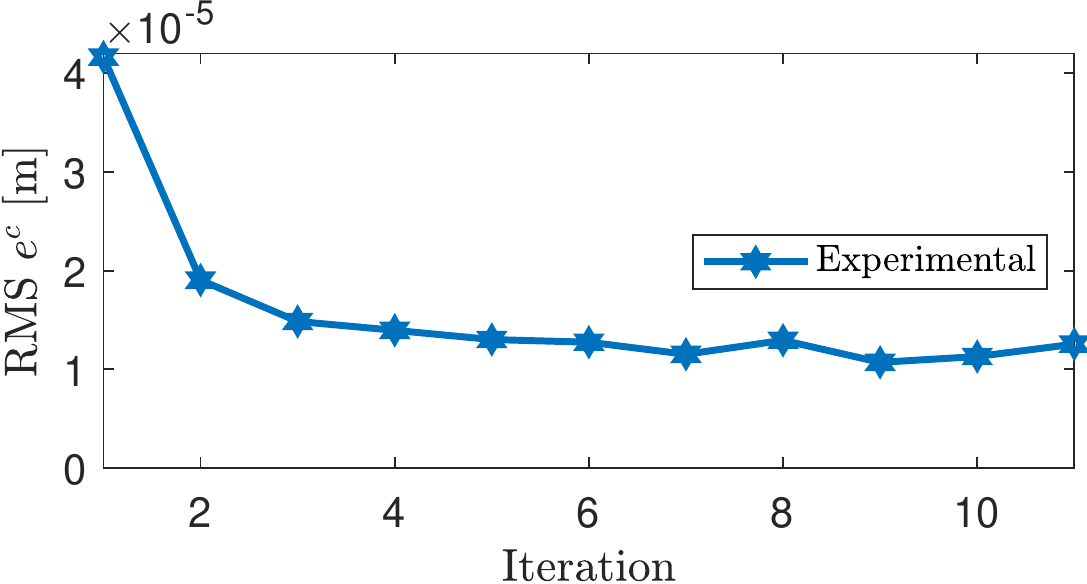}
		\vspace{-0.2cm}
		\caption{Experimental results of the A-GPR R2R controller. }
		\label{fig:exp}
		\vspace{-0.45cm}
	\end{figure}
	Figure~\ref{fig:exp} illustrates the results of A-GPR on the experimental setup.
	We see an approximately $70.9\%$ improvement of the RMS when compared to the initial RMS level in the experimental setup. 
	The experimental results illustrate that the proposed approach is suitable for industrial applications and provides promising results for future research.
	
	\section{Conclusion and Future Work}
	\label{sec:conc}
	In this work, we present a novel run-to-run controller that utilizes machine learning methods to model the output mismatch of the nominal system model to improve tracking performance.
	The results demonstrate that the proposed controller works in practice and warrants future studies with optimized implementations for industrial settings.
	The proposed method may be applied to other domains such as robotic motion control and additive manufacturing applications with minimal modifications.
	Future work will focus on presenting the formal results on performance and a demonstration of the process on similar experimental setups by incorporating varying levels of model mismatch, imperfect learning, and output constraints to meet manufacturing requirements.

	\bibliographystyle{IEEEtran}
	\bibliography{r2r_bib2}

% Generated by IEEEtran.bst, version: 1.14 (2015/08/26)
\begin{thebibliography}{10}
\providecommand{\url}[1]{#1}
\csname url@samestyle\endcsname
\providecommand{\newblock}{\relax}
\providecommand{\bibinfo}[2]{#2}
\providecommand{\BIBentrySTDinterwordspacing}{\spaceskip=0pt\relax}
\providecommand{\BIBentryALTinterwordstretchfactor}{4}
\providecommand{\BIBentryALTinterwordspacing}{\spaceskip=\fontdimen2\font plus
\BIBentryALTinterwordstretchfactor\fontdimen3\font minus
  \fontdimen4\font\relax}
\providecommand{\BIBforeignlanguage}[2]{{%
\expandafter\ifx\csname l@#1\endcsname\relax
\typeout{** WARNING: IEEEtran.bst: No hyphenation pattern has been}%
\typeout{** loaded for the language `#1'. Using the pattern for}%
\typeout{** the default language instead.}%
\else
\language=\csname l@#1\endcsname
\fi
#2}}
\providecommand{\BIBdecl}{\relax}
\BIBdecl

\bibitem{umlauft2018uncertainty}
J.~Umlauft, L.~P{\"o}hler, and S.~Hirche, ``An uncertainty-based control
  {Lyapunov} approach for control-affine systems modeled by {Gaussian}
  process,'' \emph{IEEE Control Systems Letters}, vol.~2, no.~3, pp. 483--488,
  2018.

\bibitem{oomen2018advanced}
T.~Oomen, ``Advanced motion control for precision mechatronics: Control,
  identification, and learning of complex systems,'' \emph{IEEJ Journal of
  Industry Applications}, vol.~7, no.~2, pp. 127--140, 2018.

\bibitem{wang2017newton}
Z.~Wang, C.~Hu, Y.~Zhu, S.~He, M.~Zhang, and H.~Mu, ``{Newton-ILC} contouring
  error estimation and coordinated motion control for precision multiaxis
  systems with comparative experiments,'' \emph{IEEE Transactions on Industrial
  Electronics}, vol.~65, no.~2, pp. 1470--1480, 2017.

\bibitem{altin2014robust}
B.~Alt{\i}n and K.~Barton, ``Robust iterative learning for high precision
  motion control through {L1} adaptive feedback,'' \emph{Mechatronics},
  vol.~24, no.~6, pp. 549--561, 2014.

\bibitem{amann1998predictive}
N.~Amann, D.~H. Owens, and E.~Rogers, ``Predictive optimal iterative learning
  control,'' \emph{International Journal of Control}, vol.~69, no.~2, pp.
  203--226, 1998.

\bibitem{colombino2019towards}
M.~Colombino, J.~W. Simpson-Porco, and A.~Bernstein, ``Towards robustness
  guarantees for feedback-based optimization,'' in \emph{2019 IEEE 58th
  Conference on Decision and Control (CDC)}.\hskip 1em plus 0.5em minus
  0.4em\relax IEEE, 2019, pp. 6207--6214.

\bibitem{baumgartner2020zero}
K.~Baumg{\"a}rtner and M.~Diehl, ``Zero-order optimization-based iterative
  learning control,'' in \emph{2020 59th IEEE Conference on Decision and
  Control (CDC)}.\hskip 1em plus 0.5em minus 0.4em\relax IEEE, 2020, pp.
  3751--3757.

\bibitem{cheah2006adaptive}
C.-C. Cheah, C.~Liu, and J.-J.~E. Slotine, ``Adaptive tracking control for
  robots with unknown kinematic and dynamic properties,'' \emph{The
  International Journal of Robotics Research}, vol.~25, no.~3, pp. 283--296,
  2006.

\bibitem{modares2014optimal}
H.~Modares and F.~L. Lewis, ``Optimal tracking control of nonlinear
  partially-unknown constrained-input systems using integral reinforcement
  learning,'' \emph{Automatica}, vol.~50, no.~7, pp. 1780--1792, 2014.

\bibitem{beckers2019stable}
T.~Beckers, D.~Kuli{\'c}, and S.~Hirche, ``Stable gaussian process based
  tracking control of euler--lagrange systems,'' \emph{Automatica}, vol. 103,
  pp. 390--397, 2019.

\bibitem{hewing2019cautious}
L.~Hewing, J.~Kabzan, and M.~N. Zeilinger, ``Cautious model predictive control
  using {Gaussian} process regression,'' \emph{IEEE Transactions on Control
  Systems Technology}, vol.~28, no.~6, pp. 2736--2743, 2019.

\bibitem{berkenkamp2015safe}
F.~Berkenkamp and A.~P. Schoellig, ``Safe and robust learning control with
  {Gaussian} processes,'' in \emph{2015 European Control Conference
  (ECC)}.\hskip 1em plus 0.5em minus 0.4em\relax IEEE, 2015, pp. 2496--2501.

\bibitem{li2020data}
X.~Li, H.~Zhu, J.~Ma, T.~J. Teo, C.~S. Teo, M.~Tomizuka, and T.~H. Lee,
  ``Data-driven multiobjective controller optimization for a magnetically
  levitated nanopositioning system,'' \emph{IEEE/ASME Transactions on
  Mechatronics}, vol.~25, no.~4, pp. 1961--1970, 2020.

\bibitem{bai2008towards}
E.-W. Bai and J.~Reyland~Jr, ``Towards identification of wiener systems with
  the least amount of a priori information on the nonlinearity,''
  \emph{Automatica}, vol.~44, no.~4, pp. 910--919, 2008.

\bibitem{koren1991variable}
Y.~Koren and C.-C. Lo, ``Variable-gain cross-coupling controller for
  contouring,'' \emph{CIRP annals}, vol.~40, no.~1, pp. 371--374, 1991.

\bibitem{barton2010norm}
K.~L. Barton and A.~G. Alleyne, ``A norm optimal approach to time-varying {ILC}
  with application to a multi-axis robotic testbed,'' \emph{IEEE Transactions
  on Control Systems Technology}, vol.~19, no.~1, pp. 166--180, 2010.

\bibitem{liniger2019real}
A.~Liniger, L.~Varano, A.~Rupenyan, and J.~Lygeros, ``Real-time predictive
  control for precision machining,'' in \emph{2019 IEEE 58th Conference on
  Decision and Control (CDC)}.\hskip 1em plus 0.5em minus 0.4em\relax IEEE,
  2019, pp. 7746--7751.

\bibitem{rupenyan2021performance}
A.~Rupenyan, M.~Khosravi, and J.~Lygeros, ``Performance-based trajectory
  optimization for path following control using bayesian optimization,'' in
  \emph{2021 IEEE 60th Conference on Decision and Control (CDC)}.\hskip 1em
  plus 0.5em minus 0.4em\relax IEEE, 2021.

\bibitem{haas2019mpcc}
T.~Haas, S.~Weikert, and K.~Wegener, ``{MPCC-Based} set point optimisation for
  machine tools,'' \emph{International Journal of Automation Technology},
  vol.~13, no.~3, pp. 407--418, 2019.

\end{thebibliography}

\end{document}